\theoremstyle{plain}
\newcommand{\MinVPA}{\textsc{MinVPA}\xspace}
\newcommand{\MinImmersion}{\textsc{MinImmersion}\xspace}
\newcommand{\problemtitle}[1]{\gdef\@problemtitle{#1}}% Store problem title
\newcommand{\probleminput}[1]{\gdef\@probleminput{#1}}% Store problem input
\newcommand{\problemquestion}[1]{\gdef\@problemquestion{#1}}% Store problem question
  \par\addvspace{.5\baselineskip}
  \par\addvspace{.5\baselineskip}
\newcommand{\Aa}{\mathcal{A}}
\newcommand{\Bb}{\mathcal{B}}
\newcommand{\Cc}{\mathcal{C}}
\newcommand{\tup}[1]{\langle#1\rangle}
\newcommand{\set}[1]{\{#1\}}
\renewcommand{\a}{\alpha}
\newcommand{\D}{\Delta}
\newcommand{\act}[1]{\stackrel{#1}{\longrightarrow}}
\def\ptime{\textsc{PTime}}
\def\np{\textsc{NP}}
\def\pspace{\textsc{PSPACE}}
\def\exptime{\textsc{ExpTime}}
\def\vpa{\textsc{VPA}}
\begin{document}

\title{Minimization of visibly pushdown automata is NP-complete}

\author[O.~Gauwin]{Olivier Gauwin\rsuper{a}}
\author[A.~Muscholl]{Anca Muscholl\rsuper{a}}
\author[M.~Raskin]{Michael Raskin\rsuper{b}}

\address{\lsuper{a}LaBRI, Univ.~of Bordeaux, France}
\address{\lsuper{b}TU Munich, Germany}
\thanks{Work done while the third author was affiliated
  with the University of Bordeaux.
  This work was partially supported by the ANR project DeLTA (ANR-16-CE40-0007). % chktex 8
}

\keywords{visibly pushdown automata, minimization}

\begin{abstract}
We show that the minimization of visibly pushdown automata
is NP-complete.
This result is obtained by
introducing  immersions, that recognize multiple languages
(over a usual, non-visible alphabet)
using a common deterministic transition graph,
such that each language is associated with an initial state and a set
of final states.
We show that minimizing immersions is NP-complete,
and reduce this problem to the minimization of visibly pushdown automata.
\end{abstract}

\maketitle

\section{Introduction}

Visibly pushdown automata (\vpa) are a natural model for the control
flow of recursive programs and have tight connections with tree
automata and XML schemas. They were considered for parsing
algorithms~\cite{Mehlhorn80} under the name ``input-driven pushdown
automata'', and  shown to have better space complexity than
unrestricted pushdown automata. The name ``visibly pushdown automata''
is due to Alur and Madhusudan~\cite{AlurMadhusudan04}, who initiated
their study from the perspective of program verification, and
developed the theory in several directions
(see~\url{http://madhu.cs.illinois.edu/vpa/} for an exhaustive list of
results). In particular, they
showed that the class of visibly pushdown languages
shares many desirable properties with the class of regular languages,
like determinization, closure under boolean operations and the
existence of a Myhill-Nerode congruence that defines canonical
\vpa~\cite{AlurKumarMadhusudanV05}. However, the existence of
a canonical \vpa{} does not help
for minimization, in contrast to regular languages. Similarly
to many other more complex automata models, like automata over infinite words,
two-way automata, etc,
\vpa{} do not have unique minimal automata. Even worse, the canonical \vpa{} can be
exponentially larger than a minimal \vpa.  Therefore, the minimization
problem for \vpa, besides being very relevant in practice,  is also   very challenging.

\medskip

\emph{Minimization up to partitioning.}
Various minimization procedures have been proposed
for some subclasses of deterministic \vpa.
Most of them use a partitioning of the state space into \emph{modules}:
when the \vpa{} control is in a given module,
and a call occurs, then the matching return brings the control
back to the same module.
The first model implementing this idea are~\emph{single-entry \vpa}
(\textsc{Sevpa})~\cite{AlurKumarMadhusudanV05}, where
each module has  its own set of call symbols, and these sets are
disjoint.
Moreover, each module has a specific entry state:
whenever a call of module $m$ occurs,
 the \vpa{} switches to the entry  of  $m$.
For any fixed partition of call symbols~\cite{AlurKumarMadhusudanV05}
shows that
there is a unique minimal deterministic \textsc{Sevpa}, and
that it can be computed in polynomial time.
\emph{Multiple-entry \vpa} (\textsc{Mevpa})~\cite{KumarMadhusudanViswanathan06}
allow several possible states when entering the module,
but the symbol pushed on the stack by a call
depends only on the state, not on the call symbol.
\textsc{Mevpa} enjoy the same properties as
\textsc{Sevpa} in terms of minimization:
the minimal \textsc{Mevpa} is unique and computable in polynomial time.
The two models \textsc{Sevpa} and \textsc{Mevpa}
are subsumed by \emph{call-driven automata}
(\textsc{CDA})~\cite{ChervetWalukiewicz07}, for which states are
partitioned into modules, and a call
leads to a state that depends only  on the call symbol.
A restricted version of \textsc{CDA},
called \emph{expanded CDA} (\textsc{eCDA})~\cite{ChervetWalukiewicz07},
further requires that only one call symbol can enter each module.
Minimization of \textsc{eCDA} is easy,
it resembles the Myhill-Nerode construction.
A minimization procedure for \textsc{CDA} is obtained
by adapting that of \textsc{eCDA},
and generalizes the ones for \textsc{Sevpa} and \textsc{Mevpa},
in the sense that these ones can be retrieved
from the minimization of \textsc{CDA}.

The drawback of all the subclasses mentioned above (\textsc{Sevpa},
\textsc{Mevpa}, CDA and \textsc{eCDA}) is that there exist families of
languages for which the minimal \vpa{} within the respective class is
exponentially larger than some minimal \vpa.  \emph{Block \vpa}
(\textsc{BVPA})~\cite{ChervetWalukiewicz07} were proposed to overcome
this problem: for every \vpa, there exists an equivalent \textsc{BVPA}
of quadratic size, so \vpa{} can be minimized approximately via
\textsc{BVPA} minimization.  \textsc{BVPA} differ from \textsc{Sevpa}
in that the entry state is determined by the call symbol, but may also
depend on the current state.  There is a unique minimal \textsc{BVPA}
for a given visibly pushdown language, and this \textsc{BVPA} can be
computed in cubic time, up to some partition of the language.

So all the approaches for \vpa{} minimization rely
on a fixed partition, either of the state space, or of the language,
and the difficulty of minimization relies on finding
a good partition. Given a \textsc{BVPA} and two integers $k$ and $s$,
knowing if there is an equivalent \textsc{BVPA}
with $k$ modules, each of size at most $s$, is \np-complete~\cite{Duvignau12}.
\medskip

The main result of this paper is that \vpa{} minimization is
inherently difficult: we show that the problem is \np-complete. We obtain our result by
showing \np-hardness for the following problem about deterministic
finite state automata (DFA), that can be of independent interest: given
$n$ regular languages and a bound $N$, we ask if there exists some deterministic
transition graph $\Aa$ of size $N$ such that for every given language
we find a DFA accepting it by choosing
an initial state and a set of final states of $\Aa$. We refer to this
problem as \emph{immersion minimization}.

\medskip

\emph{Further related work.}
As for regular languages, finding a minimal \emph{non-deterministic} automaton
is computationally hard, namely
\exptime-complete for non-deterministic \vpa,~\cite{Duvignau12}
(hardness follows from
the universality of non-deterministic \vpa~\cite{AlurMadhusudan04}).
The paper~\cite{HeizmannSchillingTischner17} proposes an algorithm  for computing
\emph{locally minimal} non-deterministic \vpa, relying on a reduction
to Partial Max-SAT\@.
Results on the state complexity of \vpa{} with respect to determinization,
and various language operations are reported in the survey~\cite{OkhotinSalomaa14}.

Some problems similar to the minimization of immersions
also appear in the literature.
However, to our best knowledge, no straightforward reduction  exists
from one of these problems to the minimization of immersions. The
first problem is the minimization of non-deterministic finite
automata with limited non-determinism. Whereas the minimization of
arbitrary non-deterministic finite
automata is \pspace-complete, it becomes \np-complete for automata
that have a fixed number of initial states, and are otherwise
deterministic~\cite{Malcher04}. Further \np-completeness results for
minimization of automata with small degree of ambiguity are provided
in~\cite{BM12}. A seemingly close problem from computational biology is the \emph{shortest common superstring} problem,
which asks for the shortest string containing each string from a given
set as factor. This problem is known to be \np-complete~\cite{GareyJohnson90}.

Another related problem is the minimization of tree automata.
Indeed, a word over a visibly pushdown alphabet can be viewed as
the linearization of a tree, processed in a depth-first left-to-right traversal.
This corresponds to an unranked tree, i.e,  a finite ordered tree where the arity
of each node is arbitrary.
Several automata models exist for unranked trees, and the complexity of
minimization ranges between \ptime{} and
\np~\cite{MartensNiehren07}. However,
for each of these models, determinism does not correspond exactly
to that of \vpa, and minimization results do not transfer.

\section{Automata}

\subsection{Visibly pushdown automata}

A \emph{visibly pushdown alphabet}
$\widehat\Sigma=\Sigma_c\uplus\Sigma_r\uplus\Sigma_\ell$ is
a finite set of \emph{symbols} partitioned into
\emph{call symbols} in $\Sigma_c$,
\emph{return symbols} in $\Sigma_r$, and
\emph{internal symbols} in $\Sigma_\ell$.

A \emph{visibly pushdown automaton} (\vpa{} for short)
is a tuple $\Cc=\tup{\widehat\Sigma,Q, I,F,\Gamma,\Delta}$
where $\widehat\Sigma$ is a visibly pushdown alphabet,
$Q$ is a finite set of \emph{states},
$I\subseteq Q$ and  $F\subseteq Q$ are the sets of initial,
resp.~final states, and
$\Gamma$ is the (finite) stack alphabet.
The set $\D$  has three types
of transitions, depending on the type of the input symbol:
\emph{call} transitions
$\Delta_c\subseteq Q\times\Sigma_c\times Q\times\Gamma$ that push a
symbol on the stack,
\emph{return} transitions
$\Delta_r\subseteq Q\times\Sigma_r\times \Gamma\times Q$ that pop a
symbol from the stack, and
\emph{internal} transitions
$\Delta_\ell\subseteq Q\times\Sigma_\ell\times Q$ that leave
the stack unchanged.

A configuration of $\Cc$ is a pair $(q,\sigma)$
where $q\in Q$ is the current state and $\sigma\in\Gamma^*$
is the current stack content (the top of the stack is the rightmost symbol).
A transition $(q,\sigma)\act{a}_\Cc (q',\sigma')$ corresponds to one
of the following cases:
\begin{itemize}
\item $a\in\Sigma_c$ and $\sigma'=\sigma A$ for some $q,q',A$ with
$(q,a,q',A)\in\Delta_c$,
\item $a\in\Sigma_r$ and $\sigma=\sigma' A$ for some  $q,q',A$ with
$(q,a,A,q')\in\Delta_r$,
\item $a\in\Sigma_\ell$ and $\sigma=\sigma'$ for some $q,q'$ with
$(q,a,q')\in\Delta_\ell$.
\end{itemize}
Note that only return transitions can read the top stack symbol. The
transition relation of $\Cc$ extends to words from $\Sigma^*$ as
expected.
The \emph{language accepted by $\Cc$} is the set of words  $u$ such that $(q_0,\epsilon)\act{u}_\Cc (q_f,\sigma)$
with $q_0\in I$, $q_f\in F$ and $\sigma\in\Gamma^*$.
In particular, acceptance does not require
that the final configuration has an empty stack.
A \vpa{} is \emph{deterministic} if it has a single initial state,
$\Delta_c$ does not contain two rules
$(q,a,q_1,\gamma_1)$ and $(q,a,q_2,\gamma_2)$
with $(q_1,\gamma_1)\not=(q_2,\gamma_2)$,
$\Delta_r$ does not contain two rules
$(q,a,\gamma,q_1)$ and $(q,a,\gamma,q_2)$ with $q_1\not=q_2$, and
$\Delta_\ell$ do not contain two rules
$(q,a,q_1)$ and $(q,a,q_2)$ with $q_1\not=q_2$.

\medskip

\emph{Minimization.}
We measure the size of a \vpa{} by its number of states.
This will be  the parameter that we minimize.
Another choice could be
the size of the stack alphabet.
The stack alphabet can be actually bounded by $|Q||\Sigma_c|$,
as one can always choose it as $ Q \times \Sigma_c$,~\cite{ChervetWalukiewicz07}. The problem we consider here is the
following:

\begin{problem}
  \problemtitle{\MinVPA}
  \probleminput{deterministic \vpa{} $\Cc$ and integer $N$}
  \problemquestion{does a  deterministic \vpa{} $\Cc'$ of size $N$
    exist that accepts the same language
as $\Cc$?}
\end{problem}

The main result of the paper is:

\begin{thm}\label{th:min}
\MinVPA is NP-complete.
\end{thm}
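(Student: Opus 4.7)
The plan is to prove both directions separately. For membership in \np, I would nondeterministically guess a deterministic \vpa{} $\Cc'$ with at most $N$ states, using a stack alphabet bounded by $|Q|\cdot|\Sigma_c|$ thanks to the normal form of~\cite{ChervetWalukiewicz07}, and then verify in polynomial time that $L(\Cc')=L(\Cc)$. Equivalence of deterministic \vpa{} is known to be decidable in polynomial time, so the entire check runs in polynomial time and \MinVPA is in \np.

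The substantive part is \np-hardness, which I would obtain by reducing from \MinImmersion, the auxiliary problem announced in the abstract. I would first define immersions formally (a shared deterministic transition graph $\Aa$ together with $n$ regular languages, each specified by choosing an initial state and a set of final states of $\Aa$) and the associated minimization question: does there exist such a shared graph of size at most $N$ accepting the prescribed languages? I would then prove \MinImmersion is \np-hard as a standalone result, most likely via reduction from a combinatorial covering/partitioning problem; this is where the bulk of the combinatorial content of the paper should live.

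From there, the reduction \MinImmersion $\le_p$ \MinVPA proceeds by packaging the $n$ input languages $L_1,\dots,L_n\subseteq\Sigma^*$ into a single visibly pushdown language over a visibly pushdown alphabet $\widehat{\Sigma}$ whose internal part is (a copy of) $\Sigma$, with fresh call symbols $c_1,\dots,c_n$ and a single return symbol $r$. The target language is roughly $\bigcup_i c_i L_i r$, so that a deterministic \vpa{} for it must, on reading $c_i$, enter a state from which the subsequent internal transitions compute membership in $L_i$. Because internal transitions of a \vpa{} cannot consult the stack, the internal part of the automaton is literally a deterministic transition graph shared across all indices $i$, which is exactly an immersion; the initial state is the one reached after $c_i$ and the final states are those from which $r$ leads to acceptance. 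The forward direction gives $|\Cc|\le N+O(1)$ from an immersion of size $N$.

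The main obstacle is the converse: recovering an immersion of size $\le N$ from an arbitrary minimal deterministic \vpa{} of that size. The difficulty is that the \vpa{} could in principle exploit the stack (or ``unreachable'' internal states) to factor the computation in ways not available to an immersion. To overcome this I would argue that, on inputs of the restricted shape $c_i u r$, only the single stack symbol pushed by $c_i$ is ever consulted, so the reachable internal-transition subgraph from the state entered after $c_i$ acts purely as a DFA; by restricting to the states reachable on internal symbols from $\{\delta(q_0,c_i)\}_i$ and marking as final those states whose $r$-transition leads to $F$, one obtains a legitimate immersion whose size is bounded by the number of states of $\Cc'$. Making this extraction tight, and matching the bound $N$ exactly, is the technical crux that ties the two minimization problems together.
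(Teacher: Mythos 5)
Your overall architecture matches the paper exactly: membership in \np{} via the polynomial-time equivalence test for deterministic \vpa, and hardness via the intermediate problem \MinImmersion, encoded into a single visibly pushdown language $\bigcup_i c_i L_i r$ with fresh call symbols $c_1,\dots,c_n$ and one return symbol $r$. Your treatment of the back direction of that reduction is also essentially the paper's: since the target language sits inside $\Sigma_c\Sigma_\ell^*\Sigma_r$, the stack is pushed once and popped once, one may assume a single final sink state, and stripping the initial state and the sink leaves the internal-transition graph as a deterministic transition graph whose sub-DFA (initial state $\delta(q_0,c_i)$, final states those with an $r$-transition on stack symbol $i$ into the sink) accept exactly the $L_i$. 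The bookkeeping is exact, not ``$N+O(1)$'': an immersion of size $k$ exists iff a deterministic \vpa{} of size $k+2$ for $K$ exists, so the step you flag as the ``technical crux'' is in fact routine.

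The genuine gap is elsewhere: you leave the \np-hardness of \MinImmersion{} as a placeholder (``most likely via reduction from a combinatorial covering/partitioning problem''), and that is where essentially all of the difficulty of the theorem lives. The paper reduces from 3-colorability of a graph $G=(V,E)$ with $|V|=n$, and the construction is delicate: each vertex $i$ yields a DFA $\Aa_i$ consisting of a ``dispatch'' cycle of length $m=2n(n-1)+2$ whose odd positions encode ordered pairs $(j,k)$ of vertices, with 1-transitions into ``counting'' cycles whose lengths are five distinct primes in $(3n,cn]$ not dividing $m$; the size bound $N=3m+p_1+p_2+p_3+q_1+q_2$ is chosen so that a minimal immersion within the bound is forced to consist of at most three shared dispatch cycles plus one cycle per prime, and two vertex-DFAs can share a dispatch cycle iff the vertices are non-adjacent (otherwise some state would have to be simultaneously a $q_1$-vertex and a $q_2$-vertex, which the size bound forbids). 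Without this construction, or some concrete substitute, the hardness half of the theorem is unsupported; note also that nothing in your outline addresses why merging DFAs into a shared deterministic transition graph should encode an \np-hard partitioning constraint at all, which is exactly what the prime-length cycles and the budget $N$ are engineered to do.
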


\begin{proof}
Since equivalence of deterministic \vpa{} can be checked in polynomial
time~\cite{AlurMadhusudan04}, it is clear that \MinVPA belongs to \np.
We show hardness through an intermediate problem called \MinImmersion.
We first prove that \MinImmersion reduces to \MinVPA
(Proposition~\ref{prop:MinImmersion-to-MinVPA}),
and then show that \MinImmersion is \np-hard, by
reduction from 3-colorability of graphs (Section~\ref{sec:min-immersion}).
\end{proof}

\subsection{Immersions and VPA minimization}

An immersion is a variant of a deterministic finite state automaton
used to accept multiple regular languages.  We show in
this section that minimization of immersions reduces to minimization
of \vpa.

\emph{Sub-DFAs.}
Let $\Aa=\tup{Q,\Sigma,\act{}}$ be a finite, deterministic
transition graph, so $Q$ is a finite set of states, $\Sigma$ is the alphabet, and
$\mathop{\act{}}$ is a partial function from $ Q \times \Sigma$ to
$Q$. A sub-automaton (\emph{sub-DFA} for short) $\Bb$ of $\Aa$ is a
tuple $\tup{Q,\Sigma,\act{},q_0,F}$. So $\Bb$ is a deterministic
finite automaton (DFA for short) obtained by equipping $\Aa$ with an
initial state $q_0$ and a set $F \subseteq Q$ of final states. Note
that all DFA in our constructions will have partially-defined transition
functions. However, the results are not affected by introducing a dead-state. %\marginpar{\textbf{added}}

\emph{Immersions.} Given $n$ languages $L_1,\ldots,L_n\subseteq\Sigma^*$,
an \emph{immersion} for $L_1,\ldots,L_n$
consists of a finite \emph{deterministic} transition graph $\Aa$,
and $n$ sub-DFA $\Bb_1,\ldots,\Bb_n$ of $\Aa$
such that $L(\Bb_i)=L_i$, for all $1\le i\le n$.
The size of the immersion is the number of states of
its transition graph $\Aa$.
For convenience we usually just write $\Aa$ for the immersion,
omitting the initial/final states.

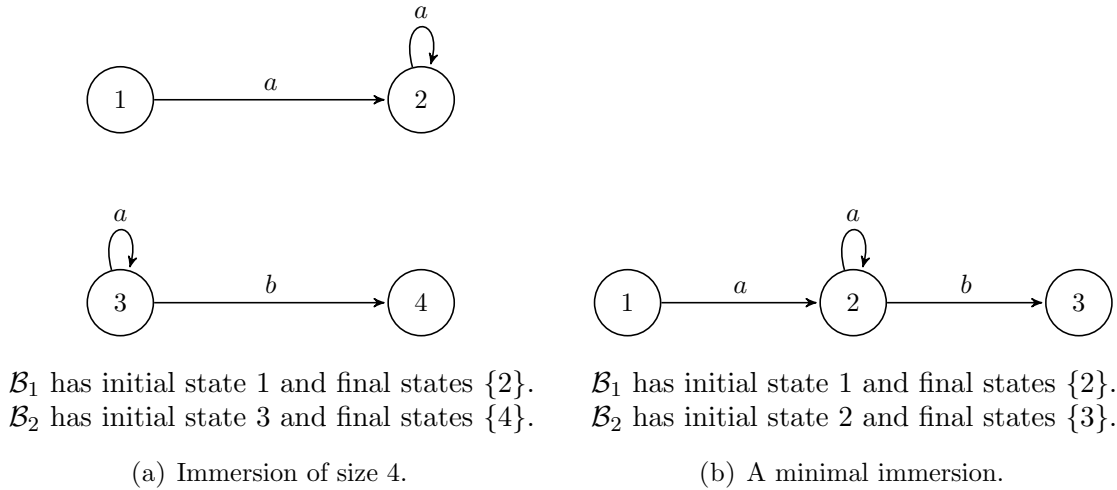
\begin{figure}
  \begin{subfigure}[b]{0.48\textwidth}
    \begin{center}
    \begin{tikzpicture}[->,>=stealth',shorten >=1pt,auto,node distance=4cm,
        semithick,scale=.5]
      \small

      \node[state] (q0) {1};
      \node[state] (q1) [right of=q0] {2};
      \path (q0) edge node {$a$} (q1)
      (q1) edge [loop above] node {$a$} (q1);

      \node[state] (q2) [below of=q0,yshift=1.3cm] {3};
      \node[state] (q3) [right of=q2] {4};
      \path (q2) edge [loop above] node {$a$} (q2)
      (q2) edge node {$b$} (q3);

    \end{tikzpicture}
    \\[1em]
    \begin{tabular}{c}
    $\Bb_1$ has initial state $1$ and final states $\{2\}$. \\
    $\Bb_2$ has initial state $3$ and final states $\{4\}$.
    \end{tabular}
    \end{center}
    \caption{Immersion of size 4.}%
    \label{fig:immersion-size-4}
  \end{subfigure}\hspace{3mm}
  \begin{subfigure}[b]{0.48\textwidth}
    \begin{center}
    \begin{tikzpicture}[->,>=stealth',shorten >=1pt,auto,node distance=3cm,
        semithick]
      \small

      \node[state] (q0) {1};
      \node[state] (q1) [right of=q0] {2};
      \node[state] (q2) [right of=q1] {3};
      \path (q0) edge node {$a$} (q1)
      (q1) edge [loop above] node {$a$} (q1)
      (q1) edge node {$b$} (q2);
    \end{tikzpicture}
    \\[1em]
    \begin{tabular}{c}
    $\Bb_1$ has initial state $1$ and final states $\{2\}$. \\
    $\Bb_2$ has initial state $2$ and final states $\{3\}$.
    \end{tabular}
    \end{center}
    \caption{A minimal immersion.}%
    \label{fig:immersion-minimal}
  \end{subfigure}
  \caption{Two immersions for the languages $L_1=a^+$ and $L_2=a^*b$.}%
  \label{fig:immersion}
\end{figure}

%%% Local Variables:
%%% mode: latex
%%% TeX-master: "ms.tex"
%%% End:

Consider for instance the two languages $L_1=a^+$ and $L_2=a^*b$.
A possible immersion for these two languages is obtained by taking
the disjoint union of two DFA, one for each language.
This is illustrated in Figure~\ref{fig:immersion-size-4},
and yields an immersion with 4 states.
A smaller immersion is obtained by merging the states $2,3$ in
Figure~\ref{fig:immersion-size-4},
as depicted in Figure~\ref{fig:immersion-minimal}.
The resulting immersion has 3 states, and is minimal for
$L_1,L_2$. %\marginpar{\textbf{added}}
Another immersion with three states is obtained by
merging the states $1,4$. The example shows that, in general, minimal
immersions are not unique, as it is already the case
for \vpa.

\begin{problem}
  \problemtitle{\MinImmersion}
  \probleminput{DFA $\Aa_1,\ldots,\Aa_n$, and integer $N$}
  \problemquestion{is there some immersion of size at most $N$ for
    $L(\Aa_1),\dots,L(\Aa_n)$?}
\end{problem}

\begin{prop}%
\label{prop:MinImmersion-to-MinVPA}
\MinImmersion reduces in polynomial time to \MinVPA\@.
\end{prop}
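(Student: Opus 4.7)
The plan is to encode the $n$ input DFA languages $L_i := L(\Aa_i)$ as a single visibly pushdown language using distinct call markers. Let $\Sigma$ be the common alphabet of the $\Aa_i$; define the visibly pushdown alphabet with $\Sigma_c = \set{c_1,\ldots,c_n}$, $\Sigma_r = \set{r}$, $\Sigma_\ell = \Sigma$, and take the target language $L = \bigcup_{i=1}^n c_i L_i r$. Construct $\Cc$ as any deterministic VPA accepting $L$, which is doable in polynomial time (e.g., a disjoint-union construction over the $\Aa_i$ with a fresh initial and accept state). The reduction outputs $(\Cc, N+c)$ for a suitably chosen constant $c$.

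For the forward direction, given an immersion $\Aa$ of size $N$ whose sub-DFAs $\Bb_i$ have initial state $p_i$ and final set $F_i$, I build a VPA with state set $Q_\Aa \cup \set{q_0, q_\mathrm{acc}}$, initial state $q_0$, and final set $\set{q_\mathrm{acc}}$. Equip it with call transitions $(q_0, c_i, p_i, \gamma_i)$ using pairwise distinct stack symbols, inherit the internal transitions from $\Aa$, and add return transitions $(q, r, \gamma_i, q_\mathrm{acc})$ for each $q \in F_i$ and each $i$. The stack symbol identifies the language at return time, so the VPA is deterministic and accepts exactly $L$; its size is $N+2$.

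For the reverse direction, from a deterministic VPA $\Cc'$ of size $M$ accepting $L$, let $p_i$ be the state reached by reading $c_i$ from the initial state of $\Cc'$, $\gamma_i$ the pushed stack symbol, and $F_i := \set{q : (q, r, \gamma_i, q') \in \Delta_r \text{ with } q' \in F_{\Cc'}}$. The internal transition graph of $\Cc'$ restricted to $\Sigma$, equipped with these $p_i$ and $F_i$, is an immersion for $L_1,\ldots,L_n$ by correctness of $\Cc'$ on $L$. Let $S$ be the set of states reachable from some $p_i$ by internal transitions. The key observation is that $F_{\Cc'} \cap S = \emptyset$: if some $q \in F_{\Cc'}$ equalled the internal image of $p_i$ under a word $w \in \Sigma^*$, then reading $c_i w$ would put $\Cc'$ into the accepting configuration $(q, [\gamma_i])$, contradicting that every word of $L$ ends with $r$ (VPA acceptance ignores the stack). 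Since $L$ is nonempty, $F_{\Cc'}$ is nonempty, so $|S| \leq M - 1$.

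The main obstacle is aligning the bounds from the two directions on a common constant $c$: the forward gives $\mathrm{min\text{-}VPA} \leq N+2$, while the reverse gives $\mathrm{min\text{-}im} \leq M-1$, leaving a one-off gap. Bridging it requires either strengthening the forward construction by reusing an immersion state as $q_0$ (possible in many cases but failing in degenerate ones like $L_i = \set{\epsilon}$, where no non-$p_i$ state is available), or strengthening the reverse by exhibiting a second VPA state---beyond those in $F_{\Cc'}$---that cannot lie in $S$. The cleanest remedy is to pad the language (e.g., by an extra internal prefix or suffix symbol outside $\Sigma$) that forces the VPA to keep at least two ``boundary'' states disjoint from $S$, yielding a tight $\mathrm{min\text{-}VPA} = \mathrm{min\text{-}im} + c$ relation and completing the polynomial-time many-one reduction.
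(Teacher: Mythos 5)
Your construction is the same as the paper's: encode the instance as the visibly pushdown language $K=\bigcup_{i=1}^n c_i L_i r$ over $\Sigma_c=\set{c_1,\dots,c_n}$, $\Sigma_r=\set{r}$, $\Sigma_\ell=\Sigma$, and relate immersion size $N$ to \vpa{} size $N+2$. Your forward direction matches the paper's and is fine. The problem is the reverse direction, where you only establish $F_{\Cc'}\cap S=\emptyset$ and hence $|S|\le M-1$, and you explicitly leave the resulting off-by-one unresolved, sketching remedies (reusing a state as $q_0$, or padding the language) without carrying any of them out. As it stands this is a genuine gap: the reduction needs a single constant that works for both directions, and $M-1$ versus $N+2$ does not give one.

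The missing step is small and does not require changing the construction: the initial state $q_0$ of $\Cc'$ is a \emph{second} state outside $S\cup F_{\Cc'}$. Indeed, $q_0\notin F_{\Cc'}$ because $\epsilon\notin K$; and $q_0\notin S$, because if $p_i\act{w'} q_0$ via internal transitions for some $w'\in\Sigma_\ell^*$, then for any $j$ and any $w\in L_j$ the word $c_i w' c_j w r$ would be accepted (the final $r$ pops $\gamma_j$ and reaches an accepting state), yet this word is not in $K$ --- a contradiction unless all $L_i$ are empty, a degenerate case one handles separately. Since $F_{\Cc'}\neq\emptyset$ when $K\neq\emptyset$, this yields $|S|\le M-2$, closing the gap. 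This is exactly what the paper does implicitly: it normalizes $\Cc'$ to have a single final state $q_f$ with no outgoing transitions and builds the immersion on $Q\setminus\set{q_0,q_f}$. Your padding idea would likely also work, but it complicates the forward direction's size accounting and is unnecessary; I would instead add the one-paragraph argument above.
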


\begin{proof}
Let $\Aa_1,\dots,\Aa_n$ be DFA over the alphabet $\Sigma_\ell$, and let
$L_i=L(\Aa_i)$ for every $i$.
We show that there exists an immersion of size $k$ for $L_1,\ldots,L_n$
if and only if there exists a deterministic \vpa{} of size $k+2$
for the language $K = \bigcup_{i=1}^n c_i L_i r$,
where $\Sigma_c=\set{c_1,\dots,c_n}$ and  $\Sigma_r=\set{r}$.

Consider an immersion of size $k$ for $L_1,\ldots,L_n$
with finite, deterministic transition graph $\Aa=\tup{Q,\Sigma,\act{}}$,
and sub-DFA $\Bb_1,\ldots,\Bb_n$ of $\Aa$
such that $L(\Bb_i)=L_i$, for all $1\le i\le n$.
Let $q_i$ and $F_i$ denote the initial state and the final states of $\Bb_i$,
respectively.
From $\Aa$  we immediately get a deterministic \vpa{} $\Cc$ for $K$ by  letting
$\Cc = (\widehat\Sigma,Q\uplus\{q_0,q_f\},\{q_0\},\{q_f\},\Gamma,\Delta)$,
with stack alphabet $\Gamma=\set{1,\dots,n}$, and $\Delta$ as follows:

\begin{itemize}
\item $\Delta_c = \{ (q_0, c_i, q_i, i) \mid 1 \le i \le n \}$,
\item $\Delta_\ell = \mathop{\act{}}$,
\item $\Delta_r = \{ (q, r, i, q_f) \mid q\in F_i,1 \le i \le n  \}$.
\end{itemize}

\noindent
Conversely, assume there is some deterministic \vpa{}
$\Cc=(\widehat\Sigma,Q,\{q_0\},F,\Gamma,\Delta)$ of size $k+2$ for
$K=\bigcup_{i=1}^n c_i L_i r$.
This language is included in $\Sigma_c\Sigma_\ell^*\Sigma_r$,
so we can assume that $\Cc$ has a single
final state, that we call $q_f$, and which has no outgoing transitions.
Let $q_i$ denote the (unique) state of $\Cc$ such that
$(q_0,c_i,q_i,A_i)\in\Delta_c$, for some $A_i\in\Gamma$.
We can also assume that $i$ is used instead of $A_i$ in
these rules, as they are the only rules in $\Delta_c$.
We define an immersion for the languages $L_1,\dots,L_n$
as the transition graph $\Aa=\tup{Q_\Aa,\Sigma,\act{}}$,
where $Q_\Aa = Q\setminus\{q_0,q_f\}$ and $\mathop{\act{}} =
\Delta_\ell$.
The sub-DFA $\Bb_1,\ldots,\Bb_n$ associated with this immersion
are obtained by setting the initial state of $\Bb_i$ to $q_i$, and setting $q \in F_i$ if $(q, r, i, q_f)\in \Delta_r$.
It is clear that $\Bb_i$ accepts precisely the words $w \in\Sigma_\ell^*$ such that
$c_i w r \in L(\Cc)$. So $L(\Bb_i) = L_i$
for every $1\le i\le n$.
\end{proof}

\section{\MinImmersion is \np-complete}%
\label{sec:min-immersion}

It is clear that \MinImmersion is in \np. We show \np-hardness by
a reduction from 3-colorability. Let $G=(V,E)$ be an undirected graph
with vertex set $V=\set{1,\dots,n}$ and edge set
$E \subseteq V^2 \setminus \set{(i,i) \mid i \in V}$. We ask whether
there is a coloring $c:V \to \set{0,1,2}$ such that $c(i) \not= c(j)$,
for every $(i,j) \in E$.

Before we define the DFA $\Aa_1,\ldots,\Aa_n$ we need some notations. Let $m = 2n(n-1)+2$. We fix a set $P=\set{p_1,p_2,p_3,q_1,q_2}$
of five distinct prime numbers $p$ such that $3n<p \le c\cdot n$, for
some suitable constant\footnote{Recall that Chebyshev's theorem says
  that there is always at least one prime
  between $n$ and $2n$.} $c$, such that no $p \in P$ divides $m$.
Let also
$N=3m+p_1+p_2+p_3+q_1+q_2$. Note that $6n^2 < N < 9n^2$, for $n$
sufficiently large.

\emph{Notations.} The alphabet used in the following for the DFA
$\Aa_i$ is  $\Sigma=\set{0,1}$. A path in some transition
graph of the form $s_1 \act{0} s_2 \act{0} \cdots \act{0}
s_n$  will be called simply a
\emph{path}. Similarly, a \emph{cycle} is a path as above, with
$s_1=s_n$. For any path $s_1 \act{0} s_2 \act{0} \cdots \act{0}
s_n$ we say %\marginpar{\textbf{added}}
that $s_n$ is \emph{0-reachable} from $s_1$,
and $s_1$ is
\emph{co-0-reachable} from $s_n$.
A \emph{$k$-cycle} denotes a cycle of length
$k$. A \emph{1-transition} is a transition labeled by 1.

We fix in the following a bijection between the set $\set{2k+1 : 1 \le k \le
  n(n-1)}$ and the set of ordered pairs of vertices $\set{(i,j) : i,j \in
  V, i \not= j}$. Hereby we denote by $\tup{i,j}$ the integer encoding the
pair $(i,j)$ w.r.t.~this fixed bijection.

We are now ready to define the DFA $\Aa_i$, where $i \in V$ is a
vertex of the given graph. The language $L_i$ of the DFA $\Aa_i$ will
be a subset of
$0^* 1 0^*$. Informally, $\Aa_i$ consists of an $m$-cycle (called ``dispatch'' cycle), such that from some of
the vertices of this cycle there is a 1-transition to some
$p$-cycle (called ``counting'' cycle) with $p \in P$. Each $p$-cycle
has a designated ``entry'' node, and all 1-transitions into the cycle
point to this node.
Assuming that the vertices of the $m$-cycle are numbered successively
$1,\dots,m$, with $1$ being the initial state, the DFA $\Aa_i$ has the following transitions:
\begin{enumerate}
\item From vertex 1 there is a 1-transition to a
$p_1$-cycle, and from vertex 2 there is a 1-transition to a
$p_2$-cycle.
\item From every other \emph{even} vertex there is a 1-transition to a
  $p_3$-cycle.
  \item Each of the remaining $n(n-1)$ \emph{odd} vertices
    is of the form $\tup{j,k}$, according to the
    bijection fixed above. The transitions out of these vertices are
    the following:

    \begin{itemize}
    \item  Each odd vertex $\tup{i,j}$ has a
      1-transition to a $q_1$-cycle.
    \item Each odd vertex  $\tup{j,i}$ with
       $\set{i,j} \notin E$, has a
      1-transition to a $q_1$-cycle.
      \item Each odd vertex  $\tup{j,i}$
      with $\set{i,j} \in E$, has a
      1-transition to a $q_2$-cycle.
    \end{itemize}
\end{enumerate}
Note that there is no 1-transition outgoing from vertices $\tup{j,k}$
where $j\not=i$ and $k\not=i$.
 As already mentioned, the initial state of $\Aa_i$ is the vertex 1 of the
 $m$-cycle. The final states are all the target states of the
 1-transitions. Figure~\ref{fig:A} shows an example $\Aa_i$.

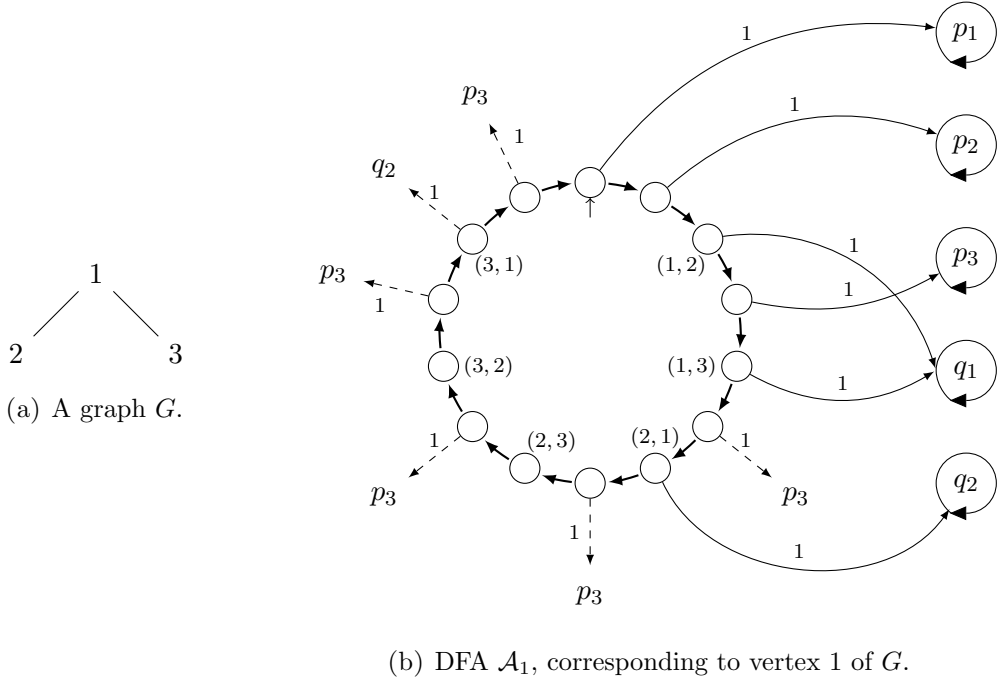
\begin{figure}[t]
  \centering%
  \begin{subfigure}{0.25\textwidth}
  \begin{center}
  \begin{tikzpicture}[node distance=1.5cm]
    % Graph
    \begin{scope}
      \node(v1) at (0,0) {$1$};
      \node(v2) [below left of=v1] {$2$};
      \node(v3) [below right of=v1] {$3$};
      \path (v1) edge (v2);
      \path (v1) edge (v3);
    \end{scope}
    \end{tikzpicture}
    \end{center}
    \caption{A graph $G$.}
  \end{subfigure}
  \begin{subfigure}{0.7\textwidth}
    % Automaton
    \begin{center}
    \begin{tikzpicture}[node distance=1.5cm]
    \begin{scope}[xshift=6cm]
    \tikzstyle{dispatch state} = [draw,shape=circle,minimum size=.4cm]
    \tikzstyle{dispatch transition} = [<-, >=latex,thick]
    \tikzstyle{p label} = [draw=none,shape=circle,minimum size=.8cm]
    \tikzstyle{p cycle} = [->, >=triangle 45 reversed]
    \tikzstyle{1-transition} = [->, >=latex]
    \tikzstyle{ghost transition} = [1-transition,dashed]
    \tikzstyle{ghost label} = [1-transition,dashed]
    \tikzstyle{bijection} = [draw=none, node distance=.6cm]
    \tikzstyle{anch}=[draw=none, node distance=.1cm, inner sep=0mm]

    % dispatch cycle
    % inspired by: http://www.texample.net/tikz/examples/cycle/
    \def \n {14}
    \def \radius {2cm}
    \def \margin {7} % margin in angles, depends on the radius
    \foreach \s in {1,...,\n} {
      \def\p{\n-\s+1}
      \node[dispatch state] (disp\s) at ({360/\n * (\p) + 90}:\radius) {};%{\s};
      \draw[dispatch transition] ({360/\n * (\p - 1)+\margin+90}:\radius)
      arc ({360/\n * (\p - 1)+90+\margin}:{360/\n * (\p)-\margin+90}:\radius);
    }

    % p-cycles
    \node[p label] (p1) at (5,4) {$p_1$};
    \draw[p cycle] (p1) +(0,-.4cm) arc (-90:270:.4cm);
    \node[p label] (p2) [below of=p1] {$p_2$};
    \draw[p cycle] (p2) +(0,-.4cm) arc (-90:270:.4cm);
    \node[p label] (p3) [below of=p2] {$p_3$};
    \draw[p cycle] (p3) +(0,-.4cm) arc (-90:270:.4cm);
    \node[p label] (q1) [below of=p3] {$q_1$};
    \draw[p cycle] (q1) +(0,-.4cm) arc (-90:270:.4cm);
    \node[p label] (q2) [below of=q1] {$q_2$};
    \draw[p cycle] (q2) +(0,-.4cm) arc (-90:270:.4cm);
    % ghost dispatch cycle
    \def \gradius {3.5cm}
    \foreach \s in {1,...,\n} {
      \def\p{\n-\s+1}
      \node[p label] (gdisp\s) at ({360/\n * (\p) + 90}:\gradius) {};%{\s};
    }

    % 1-transitions
    % to p1
    \path[1-transition] (disp1) edge [bend left] node [above] {\scriptsize $1$} (p1);
    % to p2
    \path[1-transition] (disp2) edge [bend left] node [above] {\scriptsize $1$} (p2);
    % to p3
    \path[1-transition] (disp4) edge [bend right=20] node [above] {\scriptsize $1$} (p3);
    \node[ghost label] at (gdisp6) {$p_3$};
    \path[1-transition] (disp6) edge [ghost transition] node [above] {\scriptsize $1$} (gdisp6);
    \node[ghost label] at (gdisp8) {$p_3$};
    \path[1-transition] (disp8) edge [ghost transition] node [left] {\scriptsize $1$} (gdisp8);
    \node[ghost label] at (gdisp10) {$p_3$};
    \path[1-transition] (disp10) edge [ghost transition] node [above] {\scriptsize $1$} (gdisp10);
    \node[ghost label] at (gdisp12) {$p_3$};
    \path[1-transition] (disp12) edge [ghost transition] node [below left] {\scriptsize $1$} (gdisp12);
    \node[ghost label] at (gdisp14) {$p_3$};
    \path[1-transition] (disp14) edge [ghost transition] node [above right] {\scriptsize $1$} (gdisp14);
    % to q1
    \node[anch] (ghostq1) at  (4.6,-.5) {};
    \path[1-transition] (disp3) edge [bend left=40] node [above] {\scriptsize $1$} (ghostq1);
    \path[1-transition] (disp5) edge [bend right] node [above] {\scriptsize $1$} (ghostq1);
    % to q2
    \path[1-transition] (disp7) edge [bend right=60] node [above] {\scriptsize $1$} (q2);
    \node[ghost label] at (gdisp13) {$q_2$};
    \path[1-transition] (disp13) edge [ghost transition] node [above] {\scriptsize $1$} (gdisp13);

    % bijection
    \node[bijection, node distance=.5cm] (b3) [below left of=disp3] {\scriptsize $(1,2)$};
    \node[bijection] (b5) [left of=disp5] {\scriptsize $(1,3)$};
    \node[bijection, node distance=.4cm] (b7) [above of=disp7] {\scriptsize $(2,1)$};
    \node[bijection, node distance=.5cm] (b9) [above right of=disp9] {\scriptsize $(2,3)$};
    \node[bijection] (b11) [right of=disp11] {\scriptsize $(3,2)$};
    \node[bijection, node distance=.5cm] (b13) [below right of=disp13]
    {\scriptsize $(3,1)$};

    % initial
    \node[bijection] (fake) [below of =disp1] {};
    \path[->] (fake) edge (disp1);
    \end{scope}
  \end{tikzpicture}
  \end{center}
  \caption{DFA $\Aa_1$, corresponding to vertex $1$ of $G$.}
  \end{subfigure}

  \caption{The dispatch cycle (thick) and the counting
    cycles of $\Aa_1$ ($p_1,p_2,p_3,q_1,q_2$). Dashed edges point to
    one of counting cycles. Unlabeled edges are 0-transitions. The
    final states are all target states of the 1-transitions.}%
  \label{fig:A}
\end{figure}

%%% Local Variables:
%%% mode: latex
%%% TeX-master: "ms.tex"
%%% End:

\begin{rem}\label{rem:unary}
Note that any DFA accepting ${(0^p)}^*$ must contain
      a cycle of length divisible by $p$, if $p >1$ is a prime.
\end{rem}

Let $p$ be a prime from $P$. A vertex $s$ of a
transition graph
$\Aa$ over $\Sigma=\set{0,1}$ is called a \emph{$p$-vertex} if there is a sub-DFA of $\Aa$ for
the language $1 {(0^p)}^*$ with initial state $s$.

  \begin{lem}\label{lem:cycles}
     Let $\Aa$ be a minimal immersion for $L_1,\dots,L_n$ of
     size at most $N$, and
  let $C$ be a $k$-cycle of $\Aa$. Then exactly one of the two
  following cases holds:
  \begin{enumerate}
  \item $C$ contains at least one $p_1$-vertex and  $k$ is divisible by $m$.
  \item $k$ is divisible by some unique prime $p \in P$.
    \end{enumerate}
     \end{lem}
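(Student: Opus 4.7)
The plan is to analyze, for each $0$-cycle $C$ in a minimal immersion $\Aa$, how its vertices are reached from the initial states $\iota_1,\dots,\iota_n$ of the sub-DFAs $\Bb_1,\dots,\Bb_n$. A preliminary observation is that in any minimal immersion every state must be reachable from some $\iota_i$, else it could be deleted without affecting any $L(\Bb_i)$, contradicting minimality. A second key ingredient concerns the $0$-orbit of $\iota_i$: being finite, it enters a $0$-cycle $D_i$, and for each $a$ the state $\iota_i\cdot 0^a$ must admit a $1$-transition into a counting cycle of a prime ``type'' in $P$ depending only on $a \bmod m$. Since distinct residues modulo $m$ force distinct types (e.g.\ $p_1$ at $0$ and $p_2$ at $1$), this type function has fundamental period exactly $m$; being simultaneously $m$-periodic and $|D_i|$-periodic on the cycle part forces $m \mid |D_i|$, and $D_i$ contains $p_1$-vertices at every position $\equiv 0 \pmod m$.

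For Case~1 (when $C$ contains a $p_1$-vertex $v$), I would argue that $v$ must itself lie on some $D_i$. Since $v$ carries an outgoing $1$-transition, it cannot arise purely as a counting-cycle vertex: in a minimal immersion no state reached after traversing a $1$-transition needs to retain an outgoing $1$-edge, because the languages $L_i \subseteq 0^*10^*$ admit only words with a single $1$, so any further $1$-edge plays no role in acceptance and can be dropped without changing any $L(\Bb_i)$. Combined with minimality, this forces $v$ to be $0$-reachable from some $\iota_i$. Determinism of the $0$-transition then places $v$ on $D_i$, so $C=D_i$ and $k=|C|$ is divisible by~$m$.

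For Case~2 (no $p_1$-vertex on $C$), pick any $v\in C$; by the preceding argument $v$ does not lie on any $D_i$, so any shortest reaching word from some $\iota_i$ to $v$ traverses at least one $1$-transition. Taking the last $1$ traversed, write the tail as $u \act{1} t$ followed by $0^b$ leading to $v$. Since $u$ itself carries a $1$-edge, the reasoning of Case~1 places $u$ on some $D_j$, so $u$'s type is determined by its position mod $m$ on $D_j$, whence $u$ is a $p$-vertex for some $p\in P$. Hence $t$ sits on a $0$-cycle of length divisible by $p$, and because $v$ is $0$-reachable from $t$ and itself lies on a $0$-cycle, determinism of $0$-transitions forces $C$ to coincide with that counting cycle, giving $p \mid k$. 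Uniqueness of $p$ is immediate from the size bound: two distinct primes in $P$ both exceed $3n$, so their product exceeds $9n^2 > N \ge k$, ruling out two such primes dividing $k$ simultaneously.

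The main obstacle I anticipate is justifying cleanly the ``dispatch-only origin'' claim for $1$-transitions used in both cases: making precise that in a minimal immersion every state carrying a $1$-edge must be $0$-reachable from some $\iota_i$, despite the possibility that a single state could in principle play different roles across different sub-DFAs. The coprimality of $m$ with each $p \in P$ stipulated in the setup should prevent a single $0$-cycle from simultaneously playing the dispatch and counting roles, which is the scenario that would otherwise require additional care.
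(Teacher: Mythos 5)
Your overall strategy mirrors the paper's (a dichotomy on how a cycle is reached, a mod-$m$ periodicity argument exploiting that no vertex can be a $p$-vertex for two distinct primes of $P$ within the size bound $N$, and uniqueness of $p$ from $p p' > 9n^2 > N$), and those parts are sound --- your ``fundamental period exactly $m$'' phrasing is an acceptable variant of the paper's ``special vertex'' argument, even though the literal claim that distinct residues force distinct types is false (all even positions $\ge 4$ share type $p_3$); what you actually need, and what works, is that $p_1$ occurs at a unique residue. The genuine gap is the step you yourself flag: you ground the whole case analysis on the claim that any state carrying an outgoing $1$-edge must be $0$-reachable from some $\iota_i$, and you justify it by saying spurious $1$-edges ``can be dropped without changing any $L(\Bb_i)$,'' hence are excluded ``combined with minimality.'' But minimality of an immersion is measured by the \emph{number of states}, not transitions: a state-minimal immersion may perfectly well retain useless $1$-edges out of counting-cycle states, and dropping such an edge does not reduce the state count, so no contradiction with minimality arises. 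This breaks your Case~1 (a $p_1$-vertex need not lie on any $D_i$ by this argument) and propagates into Case~2, where you additionally have to fight with reaching words containing several $1$'s.

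The paper avoids this entirely with a usefulness argument that you should adopt: in a state-minimal immersion every state $q$ lies on an accepting path of some $\Bb_i$ (otherwise $q$ can be deleted), and since $L_i \subseteq 0^* 1 0^*$ that path reads a word $0^a 1 0^b$, so $q$ is either $0$-reachable from $\iota_i$ or co-$0$-reachable from a final state of $\Bb_i$. This gives the dichotomy on cycles directly, with the case split driven by reachability rather than by the presence of a $p_1$-vertex. Note also that the lemma's Case~1 is the \emph{conjunction} ``contains a $p_1$-vertex and $m \mid k$''; you do not need to prove (and it is not obviously true at this stage) that every cycle containing a $p_1$-vertex is a dispatch cycle --- if such a cycle is only co-$0$-reachable from final states, Case~2 applies and Case~1 simply fails because $m \nmid k$ (as $k \ge pm > N$ is impossible). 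With the usefulness dichotomy substituted for your transition-dropping argument, the rest of your proof goes through.
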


  \begin{proof}
    By assumption there is some sub-DFA $\Bb_i$ of $\Aa$ accepting
    $L_i$, for every $i$. Note first that, by minimality of $\Aa$,
    every vertex of $\Aa$ is either 0-reachable from the initial state
    of some $\Bb_i$, or co-0-reachable from a final state of some
    $\Bb_i$. This will ensure that one of the two cases in the statement of the
    lemma holds for any cycle.

    Recall that immersions were defined as \emph{deterministic}
    transition graphs. %\marginpar{\textbf{added}}
    Using the assumption $|\Aa|\leqslant N$, note
    that a vertex of $\Aa$ cannot
    be both a $p$-vertex and a $p'$-vertex, for two different primes
    $p,p'$ from $P$. Then otherwise the size of $\Aa$ would be at
    least $9n^2>N$, which is a contradiction.

    Let us denote a vertex $s$ of $\Aa$ as \emph{special} if $s$ is
    $p_1$-vertex and  $s\act{0}
    s'$, with $s'$ being $p_2$-vertex.

    The first case in the statement corresponds to $C$ being 0-reachable
    from the initial state of some $\Bb_i$. Clearly, $C$ needs to have
    at least one special vertex $s$. Note also that any vertex $t$
    such that $s \act{0^{m\cdot j}} t$, where $j \ge 0$, must
    be special, because the length of the dispatch cycle is
    $m$. Assume by contradiction that $k$ is not divisible by $m$, and
    let $d = k \pmod m$. If $d$ is odd, then it follows from the
    previous remark that $C$ must contain a $p_1$-vertex that is at
    the same time a $p_2$-vertex or a $p_3$-vertex. If $d$ is even and not zero, then
    similarly, $C$ must contain a vertex that is at the same time a
    $p_2$-vertex and a $p_3$-vertex. So in both cases we obtain a contradiction
    to $|\Aa| \le N$, as already noted.

    The second case is where $C$ is co-0-reachable from a final state of
    some $\Bb_i$. Here, $k$ must be divisible by some
    $p \in P$. This prime is unique, as already observed.

    We argue finally that the two cases are mutually exclusive. If $k$
    were both divisible by $m$ and by $p \in P$ then $k>pm >N$, since
    $p$ does not divide $m$. But this is again a contradiction to $|\Aa| \le N$.
  \end{proof}

  Assume that $\Aa$ is a minimal immersion for $L_1,\dots,L_n$ of
     size at most $N$.  From Lemma~\ref{lem:cycles} we deduce that the vertex
    set of $\Aa$ is the disjoint union of two sets $V_1,V_2$, such that:
    \begin{itemize}
    \item Transitions within each $V_i$ ($i \in\set{1,2}$) are labeled
      only by 0.
      \item Transitions from $V_1$ to $V_2$ are labeled only by 1.
      \item There are no transitions from $V_2$ to $V_1$.
    \end{itemize}

   \noindent
   To see this, let us ignore the 1-labeled transitions of $\Aa$. Then
   we obtain a
    disjoint union of graphs (transitions are labeled only by
    0s). Each such graph consists of a cycle, plus some simple paths
    reaching the cycle.  From Lemma~\ref{lem:cycles} we know that each cycle  is used either
    to ``dispatch'' (case 1) or to ``count'' modulo some prime (case 2),
    and that the two cases are mutually exclusive. So by  the
    minimality of $\Aa$ we can conclude  that 1-labeled transitions of
    $\Aa$ go only from ``dispatch'' cycles to
    ``counting'' cycles.
    Again by minimality we can bound  the size and number
    of cycles in $V_1$ and $V_2$:

\begin{lem}\label{lem:disjoint-cycles}
  Assume that $\Aa$ is a minimal immersion for $L_1,\dots,L_n$ of
  size at most $N$. Then  the vertex
    set of $\Aa$ is the disjoint union of two sets $V_1,V_2$ as above, such that:
   \begin{enumerate}
  \item $V_1$ consists of at most three $m$-cycles.
    \item $V_2$ consists of $p$-cycles, one for each $p \in P$.
  \end{enumerate}
\end{lem}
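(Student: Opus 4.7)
The plan is to combine a counting argument against the budget $N=3m+p_1+p_2+p_3+q_1+q_2$ with minimality of $\Aa$. First I would lower-bound $|V_2|$. For each prime $p\in P$, the construction ensures that some language $L_i$ contains words $0^{v-1}1(0^p)^*$, since some vertex of $\Aa_i$ carries a 1-transition into a $p$-cycle; hence the corresponding sub-DFA $\Bb_i$ of $\Aa$ must accept $10^p$. By Remark~\ref{rem:unary} and Lemma~\ref{lem:cycles} (case 2) this forces a cycle inside $V_2$ whose length is divisible by $p$, and the length must in fact be exactly $p$, for otherwise the word $10^p$ would not return from the target of the 1-transition to a final vertex. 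Since a vertex cannot simultaneously be a $p$-vertex and a $p'$-vertex for $p\ne p'$ (noted in the proof of Lemma~\ref{lem:cycles}), the five primes of $P$ require five distinct cycles, giving $|V_2|\ge p_1+p_2+p_3+q_1+q_2$. Combining with $|V_1|+|V_2|\le N$ yields $|V_1|\le 3m$, and since every cycle in $V_1$ has length a positive multiple of $m$ (Lemma~\ref{lem:cycles}, case 1), at most three such cycles fit.

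The remaining work is to rule out ``extras'': stray 0-paths hanging off cycles, counting cycles of length $rp$ with $r>1$, duplicated $p$-cycles for the same prime, and $m$-cycles of length $km$ with $k>1$. Each is excluded by minimality. In $V_2$, a 1-transition targeting a vertex on a path of length $\ell>0$ before its $p$-cycle entry would cause $10^p$ to overshoot the unique final vertex; two $p$-cycles for the same prime can be merged by rerouting the feeding 1-transitions into a single cycle; and a cycle of length $rp$ with $r>1$ simply fails to accept $10^p$. In $V_1$, the languages $L_i$ are periodic modulo $m$ in their $0^*$-prefixes (inherited from $\Aa_i$'s dispatch cycle), so any stray path or oversized cycle forces, at each of its positions, the same outgoing 1-transition as the mod-$m$ corresponding position on a length-$m$ cycle; the redundant vertices can then be folded, strictly shrinking $\Aa$ and contradicting minimality.

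The main obstacle I anticipate is the folding argument on the $V_1$ side: one must use the mod-$m$ periodicity of every $L_i$ carefully to show that the 1-transitions leaving the ``redundant'' vertices really do coincide with those of the corresponding mod-$m$ positions on the surviving $m$-cycle, so that merging produces an equivalent but strictly smaller immersion rather than a smaller graph with altered languages. The $V_2$-side extras are comparatively immediate, since the short word $10^p$ already pins down the local cycle structure completely.
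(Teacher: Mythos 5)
Your overall strategy is the same as the paper's: lower-bound $|V_2|$ by one cycle of length at least $p$ per prime $p\in P$ (distinct by Lemma~\ref{lem:cycles}), deduce $|V_1|\le 3m$ from the budget $N$, and then invoke minimality to fold away stray paths, duplicated cycles, and cycles of length $jm$ or $jp$ with $j>1$. The paper's own proof is terser but does exactly this; your folding discussion for $V_1$ (using mod-$m$ periodicity of the residuals to show that merged positions carry identical outgoing 1-transitions) is a reasonable expansion of the paper's one-line claim that $C={C'}^j$ forces $C=C'$ and $j=1$ by minimality.

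However, one step as written is false: you assert twice that a counting cycle of length $rp$ with $r>1$ ``fails to accept $10^p$'' (equivalently, that the cycle length must be exactly $p$ or the word would not return to a final vertex). A cycle of length $rp$ whose final states are placed at every $p$-th position does accept $1{(0^p)}^*$ from the appropriate entry, so acceptance alone does not pin the length to $p$; only divisibility by $p$ follows (this is exactly the content of Remark~\ref{rem:unary} and of case~2 of Lemma~\ref{lem:cycles}, which speak of divisibility, not equality). The oversized counting cycle must instead be excluded the same way you exclude the other ``extras'': by minimality, replace the $rp$-cycle with a single $p$-cycle and reroute the incoming 1-transitions to its entry, strictly shrinking the immersion without changing any $L(\Bb_i)$. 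Since you already run this merging argument for duplicated $p$-cycles, the repair is immediate, but as stated that sub-case does not go through.
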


\begin{proof}
  By Lemma~\ref{lem:cycles} we know that
  $V_2$ contains at least $|P|$  cycles, one for each $p \in P$. By
  minimality of $\Aa$, $V_2$ has exactly one $p$-cycle, for each $p
  \in P$.

  Now we consider $V_1$. By Lemma~\ref{lem:cycles}
  the cycles of $V_1$ have length divisible by $m$. As before, by
  minimality $V_1$ is a disjoint union of cycles. Each cycle $C$ has the
  property that it accepts some language $\set{u \in 0^* : u1v \in L_i \text{
      for some $v \in 0^*$}}$ from one of the $p_1$-vertices of
  $C$. In particular, $C$ is equal to ${C'}^j$ for some dispatch cycle
  $C'$ of one of the $L_i$ and some $j \ge 1$. By
  minimality of $\Aa$  we obtain that $C=C'$ and $j=1$. Finally, by
  the choice of $N$, we conclude that $V_1$ consists of at most three $m$-cycles.
\end{proof}

From Lemma~\ref{lem:disjoint-cycles} we see that each of the
sub-DFA for any of the $L_i$ consists of one of the $m$-cycles in $V_1$, with the
$p_1$-vertex as initial state, together with transitions labeled by 1
to the required $p$-cycles in $V_2$.

\begin{lem}\label{lem:red}
  The graph $G$ is 3-colorable if and only if there is some  minimal immersion
  for $L_1,\dots,L_n$ of size at most $N$.
\end{lem}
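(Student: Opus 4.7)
The plan is to prove the two directions separately, both hinging on the same combinatorial fact: two languages $L_i,L_{i'}$ can share a dispatch cycle in an immersion of size at most $N$ precisely when $\set{i,i'}\notin E$. The obstruction is the odd dispatch vertex $\tup{i,i'}$: in $\Aa_i$ it carries a 1-transition to the $q_1$-cycle (rule with first coordinate $i$), while in $\Aa_{i'}$ it carries a 1-transition to the $q_2$-cycle when $\set{i,i'}\in E$ (rule with second coordinate $i'$ and the edge condition). I would handle direction $(\Leftarrow)$ first, since Lemma~\ref{lem:disjoint-cycles} already supplies almost all the needed structure.

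For $(\Leftarrow)$, I take an immersion $\Aa$ of size at most $N$ and decompose its vertex set as $V_1\cup V_2$ via Lemma~\ref{lem:disjoint-cycles}, with $V_1$ consisting of at most three $m$-cycles and $V_2$ containing one $p$-cycle per $p\in P$. Each sub-DFA $\Bb_i$ is anchored to one of the $m$-cycles at its $p_1$-vertex, which gives a map $c:V\to\set{0,1,2}$. To check that $c$ is proper, I would assume $\set{i,i'}\in E$ and $c(i)=c(i')$ and derive a contradiction: the shared dispatch cycle's vertex $\tup{i,i'}$ must carry a single 1-transition whose target supports both $\set{0^{\tup{i,i'}-1}1(0^{q_1})^{t} : t\ge 0}\subseteq L_i$ and $\set{0^{\tup{i,i'}-1}1(0^{q_2})^{t} : t\ge 0}\subseteq L_{i'}$. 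By Remark~\ref{rem:unary} and the fact that $V_2$ holds a unique $q_1$-cycle and a unique $q_2$-cycle, this target would have to lie on two disjoint cycles at once.

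For $(\Rightarrow)$, given a 3-coloring $c$ I build an immersion with $V_1$ consisting of three $m$-cycles $C_0,C_1,C_2$ and $V_2$ consisting of one $p$-cycle per $p\in P$. On each $C_\ell$ I install 1-transitions to the $p_1$-, $p_2$-, $p_3$-entries at vertex $1$, vertex $2$, and the remaining even vertices, and at each odd vertex $\tup{j,k}$ with $\set{j,k}\cap c^{-1}(\ell)\ne\emptyset$ I install the 1-transition dictated by the rule of $\Aa_i$ for some $i\in\set{j,k}\cap c^{-1}(\ell)$. The key consistency check is that if two distinct $i,i'\in c^{-1}(\ell)$ both dictate an outgoing 1-edge at the same odd vertex, then necessarily $\set{j,k}=\set{i,i'}$, and the independence of the colour class $c^{-1}(\ell)$ forces both prescriptions to agree on the $q_1$-cycle. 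Each $\Bb_i$ has initial state vertex $1$ of $C_{c(i)}$ and final states the entries of the counting cycles actually used by $\Aa_i$; a direct inspection yields $L(\Bb_i)=L_i$, and the total size is $3m+p_1+p_2+p_3+q_1+q_2=N$.

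The step I expect to be the main obstacle is ruling out, in $(\Leftarrow)$, that the per-$\Bb_i$ freedom to choose final states might somehow rescue sharing a dispatch cycle when $\set{i,i'}\in E$. The resolution is that starting from any vertex of $V_2$, the family of 0-word languages that can be accepted depends only on the single $p$-cycle on which that vertex sits, and the primality of $q_1,q_2$ together with the fact that no $p\in P$ divides $m$ force the hypothetical common 1-target to belong simultaneously to the $q_1$- and $q_2$-cycles of $V_2$, which is impossible. Once this point is nailed down, the remainder is bookkeeping.
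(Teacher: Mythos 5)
Your proposal follows the paper's proof of Lemma~\ref{lem:red} essentially verbatim: the forward direction builds one dispatch $m$-cycle per colour class plus one counting cycle per prime in $P$, with consistency of the merged 1-transitions guaranteed by the independence of each colour class, and the backward direction colours vertex $i$ by the dispatch cycle that $\Bb_i$ uses and derives properness from the fact that a shared vertex $\tup{i,j}$ with $\set{i,j}\in E$ would have to be simultaneously a $q_1$-vertex and a $q_2$-vertex, which the uniqueness of the counting cycles from Lemma~\ref{lem:disjoint-cycles} forbids. You are in fact slightly more explicit than the paper on two points it leaves implicit, namely why the per-language freedom in choosing final states cannot rescue the sharing of a dispatch cycle by adjacent vertices, and the consistency check for the merged 1-transitions in the forward construction.
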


\begin{proof}
  Let us first assume that $G$ is 3-colorable. Then we argue that
  $\Aa$ can be built from at most three dispatch cycles $C_0,C_1,C_2$,
  one for each color 0,1 and 2 (together with
  $p$-cycles, one for each $p \in P$). Cycle $C_\a$ can be used for all
  vertices $i \not= j$ of color $\a$, since they are pairwise
  unconnected. To see this, note that vertex $\tup{i,j}$ of $C_\a$ is a
  $q_1$-vertex according to the definition of $\Aa_i$; and
  $\tup{i,j}$ is also $q_1$-vertex according to  $L_j$, since
  $\set{i,j} \notin E$.

  Conversely, if $\Aa$ has size at most $N$ then by
  Lemma~\ref{lem:disjoint-cycles} there are at most three $m$-cycles
  $C_0,C_1,C_2$ in $\Aa$. We color vertex $i$ by $\a$ if the sub-DFA for
  $L_i$ uses $C_\a$. This  coloring is proper, because if the sub-DFA
  for $L_i,L_j$  both use the same dispatch cycle, then $\set{i,j} \notin E$
  since otherwise vertex  $\tup{i,j}$ would be both a $q_1$- and a
  $q_2$-vertex, contradicting Lemma~\ref{lem:disjoint-cycles}.
\end{proof}

Lemma~\ref{lem:red} yields finally the claimed result, and also the proof of
Theorem~\ref{th:min}:

\begin{thm}
   \MinImmersion is  \np-complete.
\end{thm}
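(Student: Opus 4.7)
The plan is to package the work already done in Lemmas~\ref{lem:cycles}, \ref{lem:disjoint-cycles}, and~\ref{lem:red} into a clean NP-completeness argument. Membership in \np{} is the routine half: I would guess a transition graph $\Aa$ on at most $N$ states together with, for each input DFA $\Aa_i$, an initial state $q_i$ and a set $F_i \subseteq Q$ of final states of $\Aa$, and then verify in polynomial time that the resulting sub-DFA $\Bb_i$ of $\Aa$ is language-equivalent to $\Aa_i$ using the standard product construction for DFA equivalence. Since the size of $\Aa$ is bounded by $N$, which is polynomial in the input, the guess and the verification are both polynomial.

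For hardness I would reduce from 3-colorability of an undirected graph $G=(V,E)$ using exactly the construction described just before Lemma~\ref{lem:cycles}: given $G$ with $|V|=n$, compute the primes in $P$ and the bound $N = 3m + p_1+p_2+p_3+q_1+q_2$, and produce the $n$ DFA $\Aa_1,\dots,\Aa_n$ built from one dispatch $m$-cycle per vertex plus the appropriate $1$-transitions into the counting cycles. Each $\Aa_i$ has size polynomial in $n$, and $N$ is polynomial in $n$, so the reduction is clearly polynomial-time. The correctness statement is exactly Lemma~\ref{lem:red}: $G$ is 3-colorable if and only if there exists an immersion for $L(\Aa_1),\dots,L(\Aa_n)$ of size at most $N$.

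The main obstacle in the overall argument has already been overcome in the preceding lemmas, namely showing that any immersion of size at most $N$ must have the rigid ``dispatch cycles plus counting cycles'' shape that forces the sub-DFA to share dispatch cycles only along non-edges of $G$. The only remaining obstacle at the level of this theorem is to confirm that the reduction is genuinely polynomial: the primes in $P$ lie in the interval $(3n, cn]$ by Bertrand's postulate and can be found in polynomial time by trial division, and $m = 2n(n-1)+2$ together with $N$ can be written in binary in polynomial size, while the transition tables of the $\Aa_i$ have $O(n^2)$ states each. Combining NP membership with this polynomial-time reduction from 3-colorability yields \np-completeness, and via Proposition~\ref{prop:MinImmersion-to-MinVPA} this also closes the proof of Theorem~\ref{th:min}.
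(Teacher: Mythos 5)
Your proposal is correct and follows essentially the same route as the paper: NP membership by guessing and verifying the immersion, and NP-hardness via the reduction from 3-colorability whose correctness is exactly Lemma~\ref{lem:red}. The paper's own proof of this theorem is just the one-line observation that the preceding lemmas assemble into the result, so your added remarks on the polynomial-time computability of the primes and the sizes of the $\Aa_i$ only make explicit what the paper leaves implicit.
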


\section*{Conclusions}

We have shown that the \vpa{} minimization is intrinsically difficult,
by exhibiting an \np-lower bound.
A minor modification of the construction reduces
approximation of the chromatic number
to approximating the minimal size of an equivalent \vpa{},
thus
any constant-factor approximation of \vpa{} minimisation
is \np-hard.
Our result raises the quest for efficient
implementations of  SAT-based
minimization algorithms for \vpa.

\bibliographystyle{abbrv}
\bibliography{ms}

\end{document}